\documentclass[letterpaper, 10 pt, conference]{ieeeconf}  
\IEEEoverridecommandlockouts                              

\overrideIEEEmargins                                      
\usepackage{verbatim}
\usepackage{amsmath}
\usepackage{amssymb}

\usepackage[noadjust]{cite}

\usepackage{wasysym}


\makeatletter
\newtheorem{thm}{\protect\theoremname}
\newtheorem{lem}[thm]{\protect\lemmaname}
\newtheorem{rem}[thm]{\protect\remarkname}
\newtheorem{prop}[thm]{\protect\propositionname}


\def\rea{\mathbb{R}}
\def\col{{\rm col}}
\def\calm{\mathcal{M}}
\def\frm{\mathfrak{M}}
\def\he{{\rm He}}
\def\calx{\mathcal{X}}
\usepackage{color}

\definecolor{darkgreen}{rgb}{0,0.6,0}
\usepackage{hyperref}
\hypersetup{
    colorlinks=true,
    linkcolor=darkgreen,
    filecolor=magenta,      
    urlcolor=cyan,
    }

\usepackage[caption=false,font=footnotesize]{subfig}
\DeclareMathOperator*{\Tr}{Tr}

\makeatother

\providecommand{\lemmaname}{Lemma}
\providecommand{\propositionname}{Proposition}
\providecommand{\remarkname}{Remark}
\providecommand{\theoremname}{Theorem}

\title{Control Contraction Metrics on Lie Groups }

\author{Dongjun Wu$^{1}$, Bowen Yi$^{2}$, Ian R. Manchester$^{3}$ 
	\thanks{*This project has received funding from the European Research Council (ERC) under the European Union's Horizon 2020 research and innovation programme under grant agreement No 834142 (ScalableControl).}
	\thanks{$^{1}$ D. Wu is with Department of Automatic Control, Lund
	University, Box 118, SE-221 00 Lund, Sweden {\tt\small
	dongjun.wu@control.lth.se}.}
	\thanks{$^{2}$ B. Yi is with Department of Electrical Engineering, Polytechnique Montreal, QC H3T 1J4, Canada {\tt \small
bowen.yi@polymtl.ca}.}
	\thanks{$^{3}$ I. R. Manchester is with Australian Centre for Robotics and School of Aerospace, Mechanical and Mechatronic Engineering, The University of Sydney, Australia 
 {\tt \small ian.manchester@sydney.edu.au}.}}
\begin{document}
\maketitle

\thispagestyle{empty}
\pagestyle{empty}

\begin{abstract}
In this paper, we extend the control contraction metrics (CCM) approach, which was originally proposed for the universal tracking control of nonlinear systems, to those that evolves on Lie groups. Our idea is to view the manifold as a constrained set that is embedded in Euclidean space, and then propose the sufficient conditions for the existence of a CCM and the associated controller design. Notably, we demonstrate that the search for CCM on Lie groups can be reformulated as convex conditions. The results extend the applicability of the CCM approach and provide a framework for analyzing the behavior of control systems with Lie group structures.

\end{abstract}

{\keywords nonlinear systems, contraction analysis, Lie groups, tracking control}

\section{Introduction}

Contraction analysis has gained widespread recognition both within and beyond the control community since the seminal paper by Lohmiller and Slotine  \cite{lohmiller1998contraction}. It provides a powerful and flexible tool to analyze the stability and dynamical behaviour of nonlinear systems by means of linear systems theory. Over the years, researchers have extensively applied contraction analysis to various domains, including
observer design \cite{Yi2021,aghannan2003intrinsic}, trajectory tracking
\cite{reyes2017tracking,pavlov2008incremental}, machine learning
\cite{Blocher2017,Singh2021,revay2023recurrent,yi2023equivalence}, and motion planning \cite{Singh2019,Tsukamoto2021a}. For a comprehensive review of the literature and future perspectives on this field, the interested reader may refer to the recent survey paper \cite{Tsukamoto2021} and the monograph \cite{Bullo2023book}.

One recent development in contraction analysis involves its utilization as a synthesis tool for constructive nonlinear control, exemplified by the widely popular \emph{control contraction metric} (CCM) method introduced in \cite{manchester2017control}. It is shown that by searching for a CCM, the system could be \emph{universally} stabilized for any feasible reference trajectories. A salient feature of this approach is that its design procedure
can be turned into a convex optimization problem. See also \cite{manchester2018robust} for its robust version for non-affine systems. In the last few years, the CCM approach has been successfully applied to many domains, such as the safe control of robots \cite{dawson2023safe}, adaptive control \cite{lopez2020robust}, model predictive control (MPC) \cite{sasfi2023robust}, and motion planning \cite{Tsukamoto2021a,Singh2019}.

The existing results on CCM primarily focus on systems evolving on Euclidean space. However, a majority of nonlinear systems reside on manifolds, including
robotic models and rigid body models of vehicles such as aerial or underwater robots.  It is still an open problem to extend the CCM approach to manifolds. In this paper, we propose some preliminary results for such an extension to Lie groups. The main contributions of the paper are twofold:

\begin{itemize}
\item[1.] We formulate the CCM on embedded submanifolds and then specialize it to Lie groups setting. Subsequently, we demonstrate that the search for a CCM on Lie groups can be characterized by some convex conditions.

\vspace{.4em}

\item[2.] In the sampled-data version of the ``standard'' CCM, it is required to compute the geodesic during the online implementation at every sampling time. We show that computation of geodesics can be avoided by using other types of curves. It provides easier solutions while ensuring guaranteed stability.
\end{itemize}

{\it Notations:}  Given a manifold $\mathcal{M}$, $T \mathcal{M}$ and $T^*
\mathcal{M}$ represent the tangent and co-tangent bundles of $\mathcal{M}$. We use $g(\cdot, \cdot)$ and $d$ to denote the Riemannian metric and the
associated Riemannian distance, respectively. $\nabla$ stands for the Levi-Civita connection.
$\Tr(\cdot)$ is the trace of a square matrix, and $\|\cdot\|$ is the Euclidean
$2$-norm. For symmetric matrices $A, B \in \mathbb{R}^{n\times n}$, $A>B$ means that the matrix $(A-B)$ is positive definite; for a square matrix $X$, we define ${\he}\{X\}:= X+ X^\top$. Given a vector field $f$, $D_f Q$ stands for the directional
derivative of a smooth quantity $Q$ along $f$, 
i.e. $D_f Q  = \sum_j \frac{\partial Q}{\partial x_j} f_j$, particularly $L_f Q$  representing the Lie derivative. For $m \in \mathbb{N}_+$, denote the set $\ell_m := \{ 1, \cdots, m \}$.

\section{Preliminaries}

In this section, we review some preliminary results about the control contraction metric (CCM) approach \cite{manchester2017control}.

Consider the nonlinear system
\begin{equation}
\label{sys:nltv}
\dot{x}=f(x,t)+B(x,t)u
\end{equation}
with the state $x\in \calx \subseteq \rea^n$ and the input $u\in\mathbb{R}^{m}$, 
where the functions $f: \rea^n \to \rea^n$ and $B(x)=\col (b_{1}(x),\cdots,b_{m}(x))$
are assumed continuously differentiable, and $B$ has constant rank $m$.
\footnote{The constant rank condition of $B(x,t)$ is related to the
universal exponential stabilizability of the system \eqref{sys:nltv}.
}
The paper \cite{manchester2017control} considers the Euclidean case with $\calx = \rea^n$. We call $(x_\star (t),u_\star(t)) \in \calx \times \rea^m$ a feasible pair, if it satisfies $\dot{x}_{\star}=f(x_{\star},t)+B(x_{\star},t)u_{\star}$ for all
$t\ge0$. 
\vspace{.4em}
\\ \
\textbf{Problem set.} For the given system \eqref{sys:nltv}, we aim to find a feedback law $u = k_p (x, x_\star, u_\star, t)$ that exponentially stabilizes any feasible desired trajectory $x_\star$. 

The first step in contraction analysis
is to calculate the \emph{variational dynamics} of the system \eqref{sys:nltv}, that is
\begin{equation}
\label{sys:var}
\delta\dot x=A(x,u,t)\delta x+B(x,t)\delta u
\end{equation}
with $A(x,u,t)=\frac{\partial f}{\partial x}(x,t) + \sum_{i=1}^{m}\frac{\partial b_{i}}{\partial x}u_{i}$. The variables $\delta x \in T\calx$ and $\delta u \in T\rea^m$ are the differential state and input, which are the infinitesimal variations in terms of the original system \eqref{sys:nltv}; see \cite{manchester2017control,VAN13} for details. A notable feature is that the variational system \eqref{sys:var} is linear time-varying (LTV) if we regard $x$ and $u$ as exogenous signals.

In the CCM approach, the design of a tracking controller is reformulated as the search for a smooth control contraction metric $M: \rea^{n} \times \rea \to \rea^{n\times n}_{>0}$
and a differential controller $\delta u = k_\delta (x,\delta x, u, t)$ satisfying 

\begin{itemize}
    \item[\bf C0] The metric $M$ is uniformly bounded, i.e.
\begin{equation} 
\alpha_{1}I\le M(x,t)\le\alpha_{2}I,\ \ \forall x,t\label{CCM-1}
\end{equation}
for $\alpha_{1},\alpha_{2}>0$ and satisfies the contraction condition
\begin{equation}
\label{CCM-2}
\begin{aligned}
B^\top &M \delta x  = 0  \implies
\\
&\delta x^{\top}
\left( 
\frac{\partial M}{\partial t} + D_f M
+ {\rm He}\{ M A\}
+ 2\lambda M 
\right) \delta x <0 
\end{aligned}
\end{equation}
for all $x, u , t$ and non-zero $\delta x $,
with some $\lambda>0$.
\end{itemize}

As proven in \cite{manchester2017control}, the following two stronger
conditions imply \eqref{CCM-2}:

\begin{itemize}
    \item[\bf C1] For $\delta x \ne 0$, the implication holds
\end{itemize}
\begin{equation}
\begin{aligned}
B^\top &M  \delta x = 0  \implies
\\
& \delta x^{\top}
\left( \frac{\partial M}{\partial t}
+ D_f M 
+ \he\left\{\frac{\partial f}{\partial x}^\top M \right\}
+ 2\lambda M 
\right) \delta x <0.
\end{aligned}
\end{equation}
\begin{itemize}
    \item[\bf C2] For each $i\in \ell_m$,
\end{itemize}
\begin{equation} \label{Killing}
    D_{b_i} M + 
    \he\left\{\frac{\partial b_i}{\partial x}^\top M \right\} 
    =0. 
\end{equation}

When \eqref{Killing} is satisfied, we call $b_i$ a Killing vector field
under the metric $M(x,t)$, and in this case,
$M(x,t)$ is referred to as a {\it strong} CCM. Otherwise, if only {\bf C1} holds, we call it a {\it weak} CCM.

\begin{rem}
It should be noted that the (global) existence of Killing fields 
on a {\it manifold} is related to the topological properties of the manifold \cite{nomizu1960local}, making a relatively strict condition. It might be easier to find a weak CCM for a given system on manifolds. However, the challenge can be circumvented if we consider locally.
\end{rem}


Once a CCM has been found, a controller can be synthesized in the
following two steps.
\begin{itemize}
    \item[\bf S1] Find a differential feedback 
    $\delta u = k_{\delta} (x, \delta x, u, t)$ such that 
    \begin{equation*}
        \delta x^\top \dot{M} \delta x 
        + 2\delta x^\top M (A\delta x + Bk_\delta) 
        < -2\lambda \delta x^\top M \delta x
    \end{equation*}
    holds for all $x, u, t$ and non-zero $\delta x$.
    
    \item[\bf S2] Design a feedback law $u = k_p(\cdot)$ to ensure that the closed-loop dynamics conform to the desired differential systems when subjected to the differential feedback $\delta u$ that is obtained in {\bf S1}.
\end{itemize}

For {\bf S2}, a feasible construction is that, given a minimizing geodesic $\gamma:[0,1]\to\mathbb{R}^{n}$
    joining $x_\star (t_{0})$ to $x(t_{0})$, the feedback $k_{p}$ can be selected as the 
    solution to
    \[\small
    \begin{aligned}
    k_{p} (c, u_\star , t, s)  
    =  u_{\star}(t) + 
    \int_{0}^{s}k_{\delta}\left(c(\mathfrak{s}),{c'(\mathfrak{s})},k_{p}(c,u_{*},t,\mathfrak{s}),t \right)d\mathfrak{s}
    \end{aligned}
    \]
    where $c(t,s)$ is the solution to the system \eqref{sys:nltv} with initial
    condition $\gamma(s)$. Then $u=k_{p}(c,u_{\star},t,1)$ exponentially stabilizes the trajectory
$x_{\star}(\cdot)$, i.e., 
$
d(x_{\star}(t),x(t))\le e^{-\lambda t}d(x_{\star}(0),x(0)),\ \ \forall t\ge0.
$

The control strategy described above is open-loop control as there is only onetime measurement of the state. Its performance and robustness can be enhanced by considering the sampled-date controller -- recursively applying the open-loop control described above at sampled instances \cite{manchester2017control} -- or by adopting the minimal geodesics in real time \cite{wang2020continuous}. 

\section{Control Contraction Metrics on Lie Groups}

\subsection{General results}

The Lie groups commonly encountered in control systems have a structure characterized as matrix Lie groups multiplied by (in direct product with) a vector space. Many of them, if not all, can be seen as ``constrained
submanifolds'' within certain Euclidean spaces, e.g., $SO(n)=\{X\in\mathbb{R}^{n\times n}:X^{\top}X=I,~\det X=1\}$.
Therefore, it is natural to view these as systems in Euclidean space
with constraints 
\begin{equation}
\label{eq:M}
    \mathcal{M}=\{x\in\mathbb{R}^{n}:h(x)=0\},
\end{equation}
where the smooth function $h: \rea^n \to \rea^q$ has constant rank. 

In this section, we consider the system model in the form \eqref{sys:nltv} with the state space $\calx $ being $\mathcal{M}$. To guarantee that the vector fields $f$ and $b_i$ ($ i \in \ell_m$) are tangent
to the manifold, it is necessary and sufficient to satisfy the transversality condition
\begin{equation}
\label{eq:tanspace}
L_f h(x) = 0, \quad L_{b_i} h(x) =0.
\end{equation} 
for all $x\in\mathcal{M}$ and $t\ge0$. We aim to extend the CCM approach to the systems on manifolds to exponentially track a feasible desired trajectory $x_\star (t) \in \calm$.

Mimicking the conditions {\bf C0}-{\bf C2}, we propose the corresponding version on the manifold $\calm$ as follows. A natural idea is to equip the manifold with the induced metric within $ \calm \times \rea_{\ge 0}$ that is required to be positive definite on the tangent bundle $T \calm$. These conditions are sufficient to design a CCM controller on manifolds, which will be introduced in Proposition \ref{prop:1} below.

\begin{itemize}
\item[\bf A0] There exist two positive constants $a_{1},a_{2}$ such
that
\end{itemize}
\begin{equation*}
\frac{\partial h}{\partial x}\delta x=0 \implies  a_{1}\|\delta x\|^{2}\le\delta x^{\top}M(x,t)\delta x\le a_{2}\|\delta x\|^{2}\label{A0}
\end{equation*}

\begin{itemize}
\item[\bf A1] For some $\lambda>0$ and $\forall \delta x\ne0$, the following holds
\end{itemize}
\begin{align} \label{A1}&
\left. \begin{aligned}
B^\top M \delta x & =0\\
\dfrac{\partial h}{\partial x}\delta x &=0
\end{aligned}\right\}
\implies
\\
 & \delta x^{\top}\left(\frac{\partial M}{\partial t}  
  +\partial_{f}M 
  + \he \left\{\frac{\partial f}{\partial x}^\top M \right\}
  +2\lambda M\right)\delta x<0 \nonumber
\end{align}

\begin{itemize}
\item[\bf A2] Each $b_i$ is a Killing field, i.e.,
for all $\delta x_1, \delta x_2$
\end{itemize}
\begin{equation}
\begin{aligned}
&{\partial h \over \partial x}(x)  \delta x =0
    \implies \\ 
    & \hspace{4em}
    \delta x^{\top}\left(D_{b_{i}}M
    + \he \left\{M\frac{\partial b_{i}}{\partial x} \right\}
    \right)\delta x=0.\label{A2}
\end{aligned}
\end{equation}


\begin{rem}
In contrast to the Euclidean case, the new ingredient in {\bf A0}-{\bf A1} is 
$\frac{\partial h}{\partial x} \delta x = 0$,
which imposes the constraint $\delta x \in T \calm$. Similarly to {\bf C2}, the condition {\bf A2} is related to the uniform stabilizability of the differential dynamics for arbitrary $u_\star$ \cite{van2013differential,manchester2017control}. However, our case only requires the condition on the set $\{x \in \rea^n : h(x) =0\}$.\footnote{In fact, the condition \eqref{A2} coincides with the definition of Killing vector fields i.e. $\mathcal{L}_b M(x,t) =0$, which should not be surprising. Indeed, 
$
   \mathcal{L}_b M (\delta x, \delta x)= 
    \left. \frac{d}{dt} \right|_{t=0} 
    \left(\frac{\partial \phi^b_t(x)}{\partial x} 
    \delta x \right)^\top 
    M(t,\phi^b_t(x)) 
    \frac{\partial \phi^b_t(x)}{\partial x} 
       \delta x 
$
where $\phi^b_t: x \mapsto \phi^b_t (x)$ is the flow of the vector field $b$. 
Expanding the above equation results in \eqref{A2}. }
\end{rem}

Once a function $M(x,t)$ satisfying {\bf A0}-{\bf A2} can be found, we call it a CCM on $\calm$. 
Then, we can proceed to controller design as follows. Assume that $(x_{\star}(\cdot),u_{\star}(\cdot))$ is a feasible pair of the system
\eqref{sys:nltv}
and $\gamma:[0,1]\to\mathcal{M}$ is the minimizing geodesic under the
induced metric from $\mathbb{R}^{n}$
joining $x_{\star}(t_{0})$ to $x(t_{0})$. We have the following.

\begin{prop}
\label{prop:1}
Let $M(x,t)$ be a CCM on $\calm$ satisfying {\bf A0 - A3}.
There exist open-loop and sampled-data exponentially stabilizing controllers 
of the trajectory $x_\star(\cdot)$.

\end{prop}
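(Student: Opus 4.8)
The plan is to transcribe the Euclidean CCM construction of \cite{manchester2017control} to the constrained setting, carrying out every step inside the tangent bundle $T\calm$. The organizing observation is that {\bf A0}--{\bf A2} are nothing but {\bf C0}--{\bf C2} with each inequality restricted to the directions $\frac{\partial h}{\partial x}\delta x=0$, i.e.\ to $\delta x\in T\calm$. Thus, once I verify that the geodesic, the transported displacement, and the closed-loop trajectory all remain in $\calm$ and $T\calm$, the Euclidean argument applies on that subbundle and yields the two controllers. The transversality conditions \eqref{eq:tanspace}, which make $f$ and each $b_i$ tangent to $\calm$, are what guarantee this invariance: the closed-loop flow preserves $\calm$ and its linearization preserves $T\calm$, so a displacement initialized with $\frac{\partial h}{\partial x}\delta x=0$ keeps this property for all time.

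First I would build the differential feedback of step {\bf S1}. Fix $x\in\calm$, $t$, and $\delta x\in T_x\calm$; using $A=\frac{\partial f}{\partial x}+\sum_{i=1}^m\frac{\partial b_i}{\partial x}u_i$ and $\dot M=\frac{\partial M}{\partial t}+D_fM+\sum_{i=1}^m u_iD_{b_i}M$, the quantity to be made negative in {\bf S1} rearranges as
\begin{equation*}
\delta x^\top Q\,\delta x+\sum_{i=1}^m u_i\,\delta x^\top K_i\,\delta x+2\,\delta x^\top M B\,\delta u,
\end{equation*}
where $Q:=\frac{\partial M}{\partial t}+D_fM+\he\{\frac{\partial f}{\partial x}^\top M\}+2\lambda M$ and $K_i:=D_{b_i}M+\he\{M\frac{\partial b_i}{\partial x}\}$. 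The Killing condition {\bf A2} annihilates the middle sum for every $\delta x\in T\calm$, eliminating the dependence on $u_\star$ exactly as {\bf C2} does in the unconstrained case; condition {\bf A1} then states that $\delta x^\top Q\,\delta x<0$ whenever $B^\top M\delta x=0$. A projected Finsler-lemma argument on the subspace $T_x\calm$ therefore furnishes a smooth gain $\rho(x,t)\ge0$ for which the min-norm choice $k_\delta=-\rho\,B^\top M\delta x$ renders the displayed rate strictly negative for all nonzero $\delta x\in T\calm$, which is precisely {\bf S1}.

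Next I would assemble the path-integral controller of step {\bf S2}. Let $\gamma:[0,1]\to\calm$ be a minimizing geodesic of the induced metric joining $x_\star(0)$ to $x(0)$; by construction $\gamma$ lies in $\calm$ and $\gamma'(s)\in T\calm$. Flowing each point $\gamma(s)$ forward under \eqref{sys:nltv} and integrating $k_\delta$ along the resulting family $c(t,s)$ yields $k_p$ exactly as in the excerpt; by the invariance noted above, $c$ stays in $\calm$ and the carried displacement $\partial_s c$ stays in $T\calm$, so {\bf A0}--{\bf A2} remain in force along the whole computation. Because the $s=0$ endpoint is driven by $u_\star$ it reproduces $x_\star$, while the $s=1$ endpoint is the controlled trajectory, so $s\mapsto c(t,s)$ connects $x_\star(t)$ to $x(t)$. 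Setting $E(t)=\int_0^1(\partial_s c)^\top M(c,t)(\partial_s c)\,ds$, the {\bf S1} inequality integrates in $s$ to $\dot E\le-2\lambda E$, hence $E(t)\le e^{-2\lambda t}E(0)$; since {\bf A0} makes $M$ positive definite on $T\calm$ and $E(0)$ equals the squared geodesic length $d(x_\star(0),x(0))^2$, this delivers $d(x_\star(t),x(t))\le e^{-\lambda t}d(x_\star(0),x(0))$. The sampled-data controller then follows by re-solving the geodesic and re-applying this open-loop law on each sampling interval, the contraction estimate composing across intervals as in \cite{manchester2017control}.

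The step I expect to be the main obstacle is the tangent-bundle invariance on which the entire reduction rests: one must check rigorously that $\frac{\partial h}{\partial x}\delta x=0$ is preserved by the variational dynamics, so that the \emph{restricted} conditions {\bf A0}--{\bf A2} are genuinely the ones in play along $c$, and that the gain $\rho$ produced by the restricted Finsler argument can be chosen smooth as a function on $\calm$ rather than only pointwise. Both hinge on \eqref{eq:tanspace} together with the constant-rank hypotheses on $h$ and $B$; once they are secured, the energy-contraction estimate is a routine transcription of the Euclidean proof.
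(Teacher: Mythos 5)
Your overall route is the same as the paper's: restrict \textbf{C0}--\textbf{C2} to $T\calm$ via the transversality conditions \eqref{eq:tanspace}, use the Killing condition \textbf{A2} to kill the $u_i$-dependent terms, apply the differential feedback $\delta u=-\rho B^\top M\delta x$ along the flowed geodesic family (this is exactly the PDE \eqref{eq:PDE}, with the paper choosing $\rho$ pointwise as $0$ when $\mathfrak{a}<0$ and $(\mathfrak{a}+\sqrt{\mathfrak{a}^2+\mathfrak{b}^2})/\mathfrak{b}$ otherwise, rather than via a Finsler argument), and contract the energy functional. However, there is a genuine quantitative gap in your step \textbf{S2}: you claim $E(0)$ equals the squared geodesic length $d(x_\star(0),x(0))^2$ and conclude $d(x_\star(t),x(t))\le e^{-\lambda t}d(x_\star(0),x(0))$ with constant $1$. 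This is false in the present setting, because the initial curve $\gamma$ is a minimizing geodesic of the \emph{induced ambient} metric (chosen for computability), not of $M$. Hence $E(0)=\int_0^1\gamma'^\top M\gamma'\,ds$ is the $M$-energy of an induced-metric geodesic, and \textbf{A0} only gives $a_1 d(0)^2\le E(0)\le a_2 d(0)^2$; the correct conclusion is
\begin{equation*}
d\bigl(x_\star(t),x(t)\bigr)\le \sqrt{\tfrac{a_2}{a_1}}\,e^{-\lambda (t-t_0)}\,d\bigl(x_\star(t_0),x(t_0)\bigr),
\end{equation*}
i.e.\ the paper's estimate \eqref{converge: openloop} with an overshoot constant $K\ge 1$. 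This constant is not cosmetic: it is exactly what the paper flags (``$K$ may be larger than $1$ since the initial curve $\gamma$ is the geodesic under the induced metric'').

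The gap propagates to your sampled-data argument. With a per-interval bound $K e^{-\lambda T}$ where $K>1$, naive composition across sampling intervals does \emph{not} contract; your appeal to ``composing across intervals as in \cite{manchester2017control}'' silently uses the constant-$1$ estimate, which holds in the Euclidean CCM setting (where the geodesic is taken under $M$ itself) but not here. The paper closes this by invoking the metric-equivalence Lemma \ref{lem:estim-dist} to pass between $d$ and $d_M$, and by imposing a \emph{minimal dwell time}: the sampling period $T$ must satisfy $K\sqrt{a_2/a_1}\,e^{-\lambda T}=:k<1$, after which the discrete contraction $d(x(t_{i+1}),x_\star(t_{i+1}))\le k\,d(x(t_i),x_\star(t_i))$ yields exponential convergence. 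Your proposal omits both the lemma and the condition on $T$, so as written the sampled-data claim is unproven. The fix is mechanical (insert the $K$, $\sqrt{a_2/a_1}$ bookkeeping and the constraint on $T$), but it is precisely the new ingredient that distinguishes the manifold proof from a transcription of the Euclidean one, so it should not be waved through. Your closing concerns (tangent-bundle invariance, smoothness of $\rho$) are reasonable but are not where the paper's proof does its distinctive work; the paper takes the invariance from \eqref{eq:tanspace} as you do and defines $\rho$ by the explicit pointwise formula without further regularity discussion.
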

\begin{proof}
In the proof, we show the existence of the open-loop and sampled-data controllers that exponentially stabilize the tracking error system.

1) \emph{Open-loop controller}.
Let $u(t,s), x(t,s)$ be the solution to the partial differential system
\begin{equation}\label{eq:PDE}
\begin{aligned}
\dfrac{\partial x}{\partial t} &= 
f(x,t) + B(x,t)u \\
\dfrac{\partial u}{\partial s} &=
- \dfrac{1}{2} \rho(x,t)
B(x,t)^\top {M} (x,t)\dfrac{\partial x}{\partial s} 
\end{aligned}
\end{equation}
with $s\in[0,1]$, $u(t,0) = u_\star (t),~ \forall t \ge t_0$ and $x(t_{0},s)=\gamma(s),~ \forall s\in[0,1]$, where the scalar function $\rho(x,t) \ge 0$ will be determined later in the proof to guarantee the contraction of the closed loop.

Considering the infinitesimal variable $\delta x(t) := {\partial x \over \partial s}(t,s)$, we calculate the evolution of energy:
\begin{align*}
  & \frac{d}{dt}\int_0^1 \delta x^\top M \delta x ds \\
  = & \int_0^1 2\delta x^\top  M
  \left( \frac{\partial f}{\partial x}\delta x + \sum_{i=1}^m u_i 
  \frac{\partial b_i}{\partial x} \delta x + \sum_{i=1}^m 
  \frac{\partial u_i}{\partial s}b_i
  \right) \\
  & + \delta x^\top \left( \frac{\partial M}{\partial t} + 
    D_f M + \sum_{i=1}^m u_i D_{b_i}M \right)\delta x ds \\
  = &  \int_0^1 \delta x^\top 
  \left( \frac{\partial M}{\partial t} + D_f M + 
  \he \left\{ M\frac{\partial f}{\partial x} \right\}
  \right) \delta x \\
   & + \sum_{i=1}^m u_i \delta x^\top \left( D_{b_i}M + \he \left\{
    M\frac{\partial b_i}{\partial x} \right\} \right) \delta x\\
    & + 2\delta x^\top \sum_{i=1}^m \frac{\partial u_i}{\partial s} Mb_i ds.
\end{align*}Invoking {\bf A2}, the fact that $\frac{\partial h}{\partial x} v_s =0$
and plugging in \eqref{eq:PDE}, we immediately get 
$
{d\over dt} \int_0^1 \delta x^\top M \delta x ds
=
\int_0^1 \delta x^\top 
  \left( \frac{\partial M}{\partial t} + D_f M + 
  \he \left\{ M\frac{\partial f}{\partial x} \right\}
   - \rho MBB^\top M
  \right) \delta x ds.
$

Define $\mathfrak{a} = \delta x^\top (\frac{\partial M}{\partial t} + D_f M + 
\he \left\{ M\frac{\partial f}{\partial x} \right\} - 2\lambda M)\delta $ and 
$\mathfrak{b} = \delta x^\top MBB^\top \delta x$. Invoking {\bf A1}, we set $\rho$ as $0$
if $\mathfrak{a}<0$ and $(\mathfrak{a}+\sqrt{\mathfrak{a}^2+\mathfrak{b}^2})/
\mathfrak{b}$ otherwise. It follows that 
\[
  \frac{d}{dt}\int_0^1 \delta x^\top M \delta x ds < 
  - 2 \lambda \frac{d}{dt}\int_0^1 \delta x^\top M \delta x ds,
\]
showing that $u(t,1)$ exponentially stabilizes
the trajectory $x_{\star}(\cdot)$, i.e., there exist two constants
$K, \lambda>0$ such that
\begin{equation} \label{converge: openloop}
d_{{M}}(x_{\star}(t),x(t))\le K e^{-\lambda(t-t_{0})}
d_{{M}}(x_{\star}(t_{0}),x(t_{0}))
\end{equation}where $d_M(\cdot, \cdot)$ stands for the Riemannian distance on the manifold $\calm$ under the metric $M(x,t)$. Note that here $K$ may be
larger than $1$ since the initial curve $\gamma$ is the geodesic under 
the induced metric.


2) \emph{Sampled-data controller.}
Choose a sampling time $T>0$ satisfying
$
K\sqrt{\frac{a_{2}}{a_{1}}}e^{-\lambda T} =: k <1,
$ 
in which the constant $K$ is the same as in \eqref{converge: openloop}.
At each sampling time instant $t_{i}$, $i = 1,2, \cdots$, 
measure the state $x(t_{i})$, compute a minimizing geodesic $\gamma_{i}$ connecting $x_{\star}(t_{i})$ to $x(t_{i})$ under the \emph{induced metric} from the ambient space and apply the open-loop control described
above with initial condition $x(t_{i},s)=\gamma_{i}(s)$
on the interval $[t_{i},t_{i+1})$. Then, such a sampled-data feedback exponentially stabilizes the trajectory $x_\star(\cdot)$.
Let $d(\cdot, \cdot)$ be the Riemannian distance corresponding to
the induced metric from the ambient space on $\calm$ and $d_M(\cdot, \cdot)$ the one corresponding to the metric $M$. Invoking \eqref{converge: openloop} and Lemma \ref{lem:estim-dist}, we have for all $i\in \mathbb{N}_+$ and $k \in (0,1)$,
\begin{align*}
    d(x(t_{i+1}), x_\star(t_{i+1}))
    & \le \frac{1}{\sqrt{a_1}} d_{{M}}(x(t_{i+1}), x_\star(t_{i+1})) \\
    & \le \frac{1}{\sqrt{a_1}} K e^{-\lambda T} 
     d_{{M}}(x(t_{i}), x_\star(t_{i})) \\
    & \le \sqrt{\frac{a_2}{a_1}} K  e^{-\lambda T} 
    d(x(t_i),x_\star(t_i)) \\
    & = k d(x(t_i),x_\star(t_i)) .
\end{align*}
It is then standard argument to show  
$$
d_{{M}}(x(t), x_\star(t)) \le K e^{-\tilde{\lambda}(t-t_0)} 
d_{{M}}(x(t_0), x_\star(t_0))
$$
for some $K', \tilde{\lambda}>0$ and all $t\ge t_0$.

It complets the proof.
\end{proof}

\begin{rem}
Note that it is unnecessary to choose the initial curves $\gamma(\cdot)$ and $\gamma_i(\cdot)$ as the geodesics -- under the metric $M(x,t)$ -- for the open-loop and sampled-data controllers. In fact, thanks to assumption {\bf A1}, there exist  constants $c_1, c_2>0$ such that $c_1 I \le {M}(x,t) \le c_2 I$ on $T \calm$.
Therefore, by Lemma \ref{lem:estim-dist} in Appendix, the geodesic distance 
under the metric ${M}$ is equivalent to the one induced by the metric of the ambient space. This idea can be also used to design a sampled-data feedback  when 
the geodesic on $\calm$ under the induced metric, e.g. Euclidean metric, from the ambient space can be easily computed.  
\end{rem}

\begin{rem}
    The conditions {\bf A0-A2} have two drawbacks from a numerical implementation perspective. Firstly, these conditions are non-convex, which require heavy computational burden. Second, when the dimension of the manifold $\calm$ is much smaller
    than that of the ambient space, numerous unnecessary computations will arise, adding an undue computational burden. More precisely, one needs to search for
    a matrix function taking values in $\mathbb{R}^{n \times n}$, while $\calm$ only has the dimension $q \ll n$.
\end{rem}

\subsection{Convexified conditions}

In this section, we propose a modified version of conditions to address the aforementioned numerical challenges. The results are particularly useful on systems evolved on Lie groups. Toward this end, we make the following assumption.

\begin{itemize}
    \item[\bf A3] The tangent bundle of $\calm$ in \eqref{eq:M} is spanned by \emph{independent} vector fields $\{s_1, \cdots, s_{q} \}$. Assume $\|S(S^\top S)^{-1}\|\le c_S, ~\forall x\in \calm$ with some $c_S$ and the definition $S(x) := [s_1(x), \cdots, s_q(x)]$.
\end{itemize}

Under the above assumption, we are able to find some smooth matrix function 
$E: \calm \times \rea_{\ge 0} \to \mathbb{R}^{q \times m}$  such that 
\begin{equation} \label{B-KE}
    B(x,t) = S(x) E(x,t),
\end{equation}
due to the fact $b_i \in T \calm$. Likewise, $\delta x \in T\calm $ can be written as $\delta x = S(x) v(x) $ for some $v(x) \in \mathbb{R}^q $. Now, instead of searching for $M \in \mathbb{R}^{n\times n}$ directly, we may search for another metric $\mathfrak{M}(x,t)$ in a lower-dimensional space $\rea^{q\times q}$, and the previous metric $M$ is parameterized by 
\begin{equation} \label{P_S}
    {M}(x,t) = P_S(x) \mathfrak{M}(x,t) P_S^\top (x)
\end{equation}
where $P_S = S(S^\top S)^{-1}$ is the projection operator which is computed before searching for the CCM.

Considering the boundedness of $P_S$, the condition {\bf A0} is equivalent to:

\begin{itemize}
    \item[{\bf A0$'$}] there exist two positive constants $a_{1},a_{2}$ such that
\end{itemize}
\begin{equation} \label{A0'}
    a_1 I_q  \le \frm(x,t) \le a_2 I_q, \quad 
    \forall x \in \calm, \; t\ge 0.
\end{equation}
Meanwhile, writing $\delta x = S(x) v(x) \in T \calm$,
and invoking \eqref{B-KE} and \eqref{P_S}, we have 
$
B^\top M \delta x = E^\top \frm v.
$
Then, {\bf A1}-{\bf A2} can be reformulated as:
\begin{itemize}
    \item[{\bf A1$'$}] For $ v \in \mathbb{R}^q \backslash \{0 \}$,
the following implication holds
\begin{equation}
\begin{aligned}
E^\top &\frm v =0 \implies
\\
 & v^{\top}\left(\frac{\partial \frm}{\partial t}  
  +D_{f}\frm 
  + \he\{\frm S_f\} +2\lambda \frm\right)v<0 
\end{aligned}\label{A1'}
\end{equation}
where the known matrices $S_f  := D_f (P_S^\top) S + P_S^\top \frac{\partial f}{\partial x} S$ and $S_{b_i}  := D_{b_i} (P_S^\top) S + P_S^\top \frac{\partial b_i}{\partial x} S$ are in $\rea^{q\times q}$.

    \item[{\bf A2$'$}] For each $i\in \ell_m$:
\end{itemize}
\begin{equation} \label{A2'}
    D_{b_i}{\frm}  + S_{b_i}^\top {\frm} + {\frm} S_{b_i} =0.
\end{equation}

Following \cite{manchester2017control}, we convexify {\bf A0$'$} - {\bf A2$'$} 
via the ``musical isomorphism'' $W^{-1}(x,t) = {\frm}(x,t)$. Then, \eqref{A0'} and \eqref{A2'} are equivalent to
\begin{equation} 
\label{W:A0}
    \frac{1}{a_2} I_q \le W \le \frac{1}{a_1} I_q,
\end{equation}
and\footnote{Due to topological obstructions, it is sometimes 
challenging to find a metric under which ($b_i$)s are Killing. In
that case, one needs to consider the weak form of CCM.}
\begin{equation} 
\label{W:A2}
  -D_{b_i} W + WS_{b_i}^\top + S_{b_i} W =0.
\end{equation}
respectively. 

For \eqref{A1'}, it is equivalent to the 
existence of a scalar function $\rho(x,t)$ such that for all $x,t$:
\begin{equation} \label{W:A1}
         -\frac{\partial W}{\partial t}
           - D_f {W}  
           + W S_f^\top  + S_f {W} 
           + 2 \lambda {W} 
           - \rho EE^\top < 0.
\end{equation}
The formulas \eqref{W:A0} - \eqref{W:A1} is convex w.r.t. to the metric $W$ to be searched for. Once $\rho$ and $W$ has been obtained, we can proceed to controller design with the CCM 
\begin{equation} \label{tildeM}
    M = P_S W^{-1} P_S^\top.
\end{equation}

\begin{rem}
In Lie groups, the above condition {\bf A3} is always  guaranteed as the tangent bundle of a Lie group $G$ is trivial in the sense that $TG = G \times \mathfrak{g}$, see e.g., \cite{lee2012smooth}.
Thus, for example, $TG$ is spanned by left (right)-invariant vector fields. On the other hand, generally even if there exist no global vector fields spanning $T\calm$, it is always possible to work locally.
\end{rem}

\subsection{CCM on Lie groups}
Most of Lie groups in control applications can be modeled as $G \times \mathbb{R}^l$, where $G$  is a Lie subgroup of $GL(\mu) = \{A \in \mathbb{R}^{\mu\times \mu}: 
A \text{ invertible}\}$.\footnote{More generally, one can consider $G \times (\mathbb{S}^1)^p \times \rea^l$. Then, the Lie group can be embedded in $\rea^{\mu^2 + 2p + l}$.} This class of Lie groups is naturally embedded in the Euclidean space $\rea^{\mu^2+ l}$.

We use the space $O(2) \times \mathbb{R}$ as a case study to illustrate how to apply the methods proposed in the previous subsection.

{\it Step 1}: We embed the Lie group $O(2) \times \mathbb{R}$ into $\rea^{5}$ via
\[
 O(2)\times \rea \ni (R, x) \mapsto (r,x):=[R_{11}, R_{12}, R_{21}, R_{22}, x]^\top
\]
where $R_{ij}$ are the elements of $R\in O(2)$. The constraint map $h$, with $q=3$, is then
\[
    h(r,x) = 
    \begin{bmatrix}
        r_1^2 + r_3^2 - 1 \\
        r_2^2 + r_4^2 -1  \\
        r_1 r_2 + r_3 r_4  
    \end{bmatrix},
\]
which is a reinterpretation of $R^\top R = I_2$. From $h$, we
can determine the matrix
$S(r,x)$. This is rather straighforward:
\[
S(r,x) = 
\begin{bmatrix}
    0 & 0 & 0 & 0 & 1\\
    -{r_2 \over \sqrt{2}}   & {r_1  \over \sqrt{2}}  & -{r_4 \over \sqrt{2}} & {r_3  \over \sqrt{2}} & 0
\end{bmatrix}^\top.
\] 
In addition, we have $S^\top S = I_2$, and thus {\bf A0} is satisfied. It yields  $P_S (r,x) = S(r,x)$. 

{\it Step 2:}
Write the system dynamics into standard form \eqref{sys:nltv}
and search for
$W\in \rea^{2 \times 2} $ satisfying \eqref{W:A0}-\eqref{W:A1}.

{\it Step 3:}
Solve the partial differential equation \eqref{eq:PDE} -- the path integral -- to obtain the controller. We need to calculate the geodesic on $O(2)\times \mathbb{R}$
under the induced metric from the ambient space.
Since $O(2)$ is compact, the geodesic is nothing but given by the matrix
exponential. 
That is, given $R_1, R_2 $ in the same component of $O(2)$,
the geodesic joining $R_1$ to $R_2$ is given by 
$t \mapsto R_1 \exp(\log(R_1^\top R_2)t)$. Thus, the geodesic on $O(2)\times \mathbb{R}$
joining $(R_1, x_1)$ to $(R_2, x_2)$
is $t \mapsto (R_1 \exp(\log(R_1^\top R_2)t), (1-t)x_1 + t x_2) $.

Following the above three steps, we are able to design CCM-based controllers for general control systems evolving on $G \times G_1 \times \mathbb{R}^l$, where $G$ is a compact Lie group in $GL(\mu)$ and $G_1$ represents a Lie group embedded in $\mathbb{R}^p$.

\begin{rem}
In {Step 1}, it is unnecessary to compute $S$ from 
the constraint $h(r,x)$. As we mentioned before, the tangent bundle
of a Lie group is trivial and can be simply chosen as the left (or 
right) invariant vector fields, which is isomorphic to the Lie algebra.
For example, the Lie algebra of $O(2)\times \rea$ is 
$\mathfrak{o}(2) \times \rea$ for a basis is
\[
    E_1 = 
    \begin{bmatrix}
        0 & 0 \\ 0 & 0 
    \end{bmatrix} \times \{ 1\}.
    \quad
    E_2 = 
    \begin{bmatrix}
        0 & {1\over \sqrt{2}} \\ {-1\over \sqrt{2}} & 0
    \end{bmatrix} \times \{ 0 \},
\]
The left-invariant vector fields corresponding to $E_1,E_2$ are
\[
\begin{aligned}
    S_1  = 
    \begin{bmatrix}
        0 & 0 \\ 0 & 0 
    \end{bmatrix} \times \{ 1 \} , \;
    S_2  = R
    \begin{bmatrix}
        0 & {-1\over \sqrt{2}} \\ {1\over \sqrt{2}} & 0
    \end{bmatrix} \times \{ 0 \},
\end{aligned}
\]from which one easily recovers $S(r,x)$.

\end{rem}

{\it Example 1:} Consider the space 
$SE(3)$:
$$ 
SE(3)=\left\{ \begin{bmatrix}R & v\\
0 & 1
\end{bmatrix}
\in
\mathbb{R}^{4\times4}:R^{\top}R=I,\det R=1,v\in\mathbb{R}^{3}\right\} .
$$
which is isomorphic to $SO(3) \times \mathbb{R}^3$. Embed $SE(3)$ into $\rea^{12}$ as in the previous subsection and denote the state variable in $\rea^{12}$ as $(r,x)$.
Note that $\dim \{SE(3)\} = 6$, we shall search for a CCM $M(x,t)\in \rea^{6\times 6}$.
Since $\mathfrak{se}(3) = \mathfrak{so}(3)\times \rea^3$, it is
easy to calculate $S(r,v) \in \rea^{12 \times 6}$ as
\[
    S(r,x) = 
    \begin{bmatrix}
        0_{9\times 3} & S_r \\ I_3 & 0_{3\times 3}
    \end{bmatrix}
\]in which
\[
    S_r = \frac{1}{\sqrt{2}}
    \begin{bmatrix}
        -r_2 & -r_3 & 0 \\
        r_1 & 0 & -r_3 \\
        0 & r_1 & r_2 \\
        -r_5 & -r_6 & 0 \\
        r_4 & 0 & -r_6 \\
        0 & r_4 & r_5 \\
        -r_8 & -r_9 & 0 \\
        r_7 & 0 & -r_9 \\
        0 & r_7 & r_8
    \end{bmatrix}
\]
One can verify that $S^\top S = I_6$. As a consequence, $P_S(r,x) = S(r,x)$.

Consider the system dynamics
\begin{align*}
\dot{R} & =R\Omega\\
\dot{v} & =-kv+Re,
\end{align*}
in which $e$ is a fixed unit vector in $\mathbb{R}^{3}$
and $\Omega\in\mathfrak{so}(3)$ can be directly controlled.
The control objective is to make $v(t)$ exponentially converge to a given desired feasible $ v_{\star}(t)$.

The matrices $S_f$ and $S_{b_i}$ can be easily computed:
\[
S_f = 
\begin{bmatrix}
    -k I_3 & (I_3 \otimes e^\top) S_r \\
    0_{3\times 3} & 0_{3\times 3}
\end{bmatrix}, \quad
S_{b_i} =
\begin{bmatrix}
   0_{3\times 3} & 0_{3\times 3} \\ 0_{3\times 3} & F_i 
\end{bmatrix}
\]where
\begin{align*}
F_1 &  = 
\frac{1}{2}
\begin{bmatrix}
   r_2-r_4 & r_3 & r_6  \\
   0 & -r_4 & -r_5 \\
   0 & r_1 & r_2
\end{bmatrix}, \\
F_2 & = 
\frac{1}{2}
\begin{bmatrix}
    -r_7 & 0 & r_9 \\
    r_2 & r_3 - r_7 & -r_8 \\
    -r_1 & 0 & r_3
\end{bmatrix}, \\
F_3 & =
\frac{1}{2}
\begin{bmatrix}
    -r_8 & -r_9 & 0 \\
    r_5 & r_6 & 0 \\
    -r_4 & -r_7 & r_6 - r_8
\end{bmatrix}.
\end{align*}

We can then substitute these matrices into \eqref{W:A0}-\eqref{W:A1}
to solve for the matrix function $W$.





\subsection{More abstract manifolds}

In this subsection, we provide an intrinsic treatment of CCM on abstract manifolds whose embeddings into Euclidean spaces are not immediately obvious. These results are of theoretical interest and may be viewed as a high level guideline -- thanks to their much
simpler forms. 

Consider a Riemannian manifold $\calm$ equipped with a metric $g_0$.
Let $x(t,s)$ be the solution to the system \eqref{sys:nltv} with initial
condition $\gamma(s)$ at $t=0$, with $\gamma: [0,1]\to \calm$ 
a geodesic under the metric $g_0$.

We shall look for a CCM ${g}$, such that
\begin{equation} \label{Riem: CCM}
    \frac{1}{2} \frac{d}{dt}
    g\left(
    \frac{\partial x(t,s)}{\partial s}, \frac{\partial x(t,s)}{\partial s}
    \right)
    \le - \lambda 
    g\left(
    \frac{\partial x(t,s)}{\partial s}, \frac{\partial x(t,s)}{\partial s}
    \right)
\end{equation} for some $\lambda >0$ and all $u, \, t\ge 0, \, s\in [0,1]$.
Let $D \over ds$ be the covariant derivative associated with the metric $g$.
For notational ease, denote $v_s:= \frac{\partial x(t,s)}{\partial s}$,
then the left hand side of \eqref{Riem: CCM} can be calculated as
\begin{align*}
     & \frac{1}{2} \frac{dg\left(v_s, v_s \right) }{dt}   \\
     =&    g\left(\frac{D v_s}{dt}, v_s\right) + \dot{g}( v_s, v_s) \\
     =&  g\left(\nabla_{v_s} \left(f 
       +\sum_{i=1}^m u_i b_i\right), v_s \right)+ \dot{g}( v_s, v_s)\\
     =& g(\nabla_{v_s}f + \sum_{i=1}^m u_i \nabla_{v_s} b_i 
         + \sum_{i=1}^m \frac{\partial u_i}{\partial s} b_i , v_s)
         + \dot{g}( v_s, v_s)\\
     =& g(\nabla_{v_s}f , v_s) + \sum_{i=1}^m u_i g(\nabla_{v_s} b_i,v_s) 
       + \sum_{i=1}^m \frac{\partial u_i}{\partial s} g(b_i , v_s)  \\
      & \quad \hspace{4.1cm} \quad + \dot{g}( v_s, v_s). 
\end{align*}
If \eqref{Riem: CCM} holds for all $u$,
the second term on the last line vanish since it depends linearly
on $u$. In other words,
\begin{equation}
    g(\nabla_{v} b_i, v) =0, \quad \forall v \in T\calm, \; i \in \ell_m.
\end{equation} This is nothing but saying that $b_i$s are Killing fields (c.f. 
{\bf C1}).

To continue, note that if for $v\in T\calm$ satisfying
$g(b_i, v) =0 $ for all $i \in \ell_m$, there holds
\begin{equation} \label{Riem: CCM1}
    g(\nabla_{v} f, v) + \dot{g}(v, v)  + 2 \lambda g(v, v)<0,
\end{equation}
then we can simply design a ``differential controller'' as
$\frac{\partial u_i}{\partial s} = -\rho g(b_i, v_s)$
for some non-negative function $\rho(x,t)$ as
in Proposition \ref{prop:1}. We underline that \eqref{Riem: CCM1}
is exactly \eqref{A1}. 

Like in the submanifold case, 
the search for a CCM on an abstract manifold can be convexified via
the musical isomorphism: $\flat: T\calm \to T\calm^*$. In words, given
$v\in T\calm$ and the Riemannian metric $g$, we lift $v$ and $g$ to $T\calm^*$:
\begin{align*}
     v&=v^i\frac{\partial}{\partial x_i} 
      \mapsto  v^i g_{ij} dx^j \\
     g& = g_{ij}dx^i dx^j \mapsto g^{ij} \frac{\partial}{\partial x_i} \frac{\partial}{\partial x_j}
\end{align*}where $(g^{ij})$ is the inverse of $(g_{ij})$.

For implementation, all the expressions will need to be written in local
coordinates, from where one can recover the assumptions {\bf A0}-{\bf A2}. In
general, however, one should not expect to solve the problem globally
like on Lie groups.

\section{Concluding remarks}

This paper has presented an extension of the control contraction metrics approach from Euclidean space to Lie groups, in which we view the manifolds as constrained sets.
In particular, we show that the search for CCM on matrix Lie groups (potentially
with a direct product with a vector space) can be formulated as the convex conditions. Future directions would be to apply the proposed approach to study the trajectory tracking control of some  practical systems on manifolds.

\appendix


\begin{lem} \label{lem:estim-dist} 
Let $g_1, g_2$ be two Riemannian metrics on a manifold $\calm $ satisfying
\begin{equation}
a_{1}g_1(v,v)\le g_2(v,v)\le a_{2}g_1(v,v),\ \ \forall v\in T\mathcal{M}
\end{equation}for some positive constants $a_1, a_2$.
Let $d_1$ and $d_2$ be the induced Riemannian distances of the two metrics
respectively. Then there holds
\begin{equation}
\sqrt{a_{1}}d_{1}(x,y)\le d_{2}(x,y)\le\sqrt{a_{2}}d_{1}(x,y),\ \ \forall x,y\in\mathcal{M}\label{dist-bd}
\end{equation}
where $d_g$ and $d_{\bar g}$ are the induced distances on $\cal M$, whenever they are well defined.
\end{lem}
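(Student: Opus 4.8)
The plan is to work directly from the definition of Riemannian distance as an infimum of curve lengths and to reduce the global distance comparison to a pointwise comparison of the length integrands. Recall that for a metric $g_i$ the distance is $d_i(x,y) = \inf_\gamma \int_0^1 \sqrt{g_i(\dot\gamma(t),\dot\gamma(t))}\,dt$, where the infimum runs over all piecewise-smooth curves $\gamma:[0,1]\to\calm$ with $\gamma(0)=x$ and $\gamma(1)=y$. Since the hypothesis is a fiberwise quadratic-form inequality on $T\calm$, the first task is to convert it into an inequality between the integrands appearing in these length functionals.

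First I would take square roots of the hypothesis. Because $a_1,a_2>0$ and both $g_1(v,v)$ and $g_2(v,v)$ are nonnegative, the chain $a_1 g_1(v,v)\le g_2(v,v)\le a_2 g_1(v,v)$ yields $\sqrt{a_1}\,\sqrt{g_1(v,v)}\le \sqrt{g_2(v,v)}\le \sqrt{a_2}\,\sqrt{g_1(v,v)}$ for every $v\in T\calm$. Applying this with $v=\dot\gamma(t)$ and integrating over $t\in[0,1]$ gives, for every admissible curve $\gamma$,
\[
\sqrt{a_1}\,L_1(\gamma)\le L_2(\gamma)\le \sqrt{a_2}\,L_1(\gamma),
\]
where $L_i(\gamma)$ denotes the $g_i$-length of $\gamma$. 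The essential point here is that this holds curve-by-curve, that is, with the same curve on both sides of each inequality.

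Next I would pass to the infimum using the elementary monotonicity fact that if $F(\gamma)\le G(\gamma)$ for every $\gamma$, then $\inf_\gamma F\le\inf_\gamma G$. Applying this to $L_2(\gamma)\le\sqrt{a_2}\,L_1(\gamma)$ gives $d_2(x,y)\le\sqrt{a_2}\,d_1(x,y)$, and applying it to $\sqrt{a_1}\,L_1(\gamma)\le L_2(\gamma)$ gives $\sqrt{a_1}\,d_1(x,y)\le d_2(x,y)$; together these are exactly \eqref{dist-bd}. I expect the only point genuinely requiring care to be the reason this two-step argument is necessary rather than simply ``minimizing both sides at once'': the length-minimizing curves for $g_1$ and $g_2$ are in general distinct, so the two infima cannot be compared through a single common geodesic, and the comparison must be established at the level of the length functionals before being passed through the (monotone) infimum. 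The remaining steps are routine, and the clause ``whenever they are well defined'' simply covers the case in which $x$ and $y$ lie in different connected components, where both distances are $+\infty$ and the stated inequalities hold trivially in the extended reals.
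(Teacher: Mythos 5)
Your proof is correct, and its core estimate is the same as the paper's: take the pointwise square root of the quadratic-form inequality to get $\sqrt{a_1}\sqrt{g_1(v,v)}\le\sqrt{g_2(v,v)}\le\sqrt{a_2}\sqrt{g_1(v,v)}$, then integrate along a curve to compare lengths. Where you differ is in how the extremization is handled. The paper fixes the two \emph{minimizing geodesics} $\gamma_{g}$ and $\gamma_{\bar g}$ joining $x$ to $y$ and argues, e.g., $d_{\bar g}(x,y)=L_{\bar g}(\gamma_{\bar g})\le L_{\bar g}(\gamma_{g})\le\sqrt{a_2}\,L_{g}(\gamma_{g})=\sqrt{a_2}\,d_{g}(x,y)$, using the minimality of $\gamma_{\bar g}$ to swap in the other metric's geodesic; this presupposes that minimizers exist (hence the paper's ``whenever they are well defined'' caveat). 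You instead keep the distance as an infimum of the length functional over all admissible curves and pass the curve-by-curve inequality $\sqrt{a_1}L_1(\gamma)\le L_2(\gamma)\le\sqrt{a_2}L_1(\gamma)$ through the infimum by monotonicity. Your route is marginally more general: it needs no existence of minimizing geodesics, so it works verbatim on incomplete manifolds where the infimum is not attained, and your reading of the ``well defined'' clause (infinite distance across components, handled in the extended reals) is cleaner than requiring geodesics to exist. The paper's version buys a more concrete, two-line computation at the cost of that existence hypothesis; both give identical constants. Your remark that the minimizers for $g_1$ and $g_2$ are distinct, so one cannot compare the infima through a single common curve, is exactly the subtlety the paper's proof navigates via the $L_{\bar g}(\gamma_{\bar g})\le L_{\bar g}(\gamma_{g})$ step.
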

\begin{proof}
For a given pair of points $x,y\in\mathcal{M}$ such that $d_{g}(x,y)$
and $d_{\bar{g}}(x,y)$ are defined, let $\gamma_{g}:[0,1]\to\mathcal{M}$
and $\gamma_{\bar{g}}:[0,1]\to\mathcal{M}$ be the minimizing geodesics
joining $x$ to $y$ for $d_{g}$ and $d_{\bar{g}}$, respectively.

From Assumption {\bf A1}, we have on one hand,
\begin{align*}
d_{\bar{g}}(x,y) & =\int_{0}^{1}\sqrt{\bar{g}(\gamma_{\bar{g}}'(s),\gamma_{\bar{g}}'(s))}ds\\
 & \le\int_{0}^{1}\sqrt{\bar{g}(\gamma_{g}'(s),\gamma_{g}'(s))}ds\\
 & \le\sqrt{a_{2}}\int_{0}^{1}\sqrt{g(\gamma_{g}'(s),\gamma_{g}'(s))}ds\\
 & =\sqrt{a_{2}}d_{g}(x,y),
\end{align*}
with the notation $(\cdot)' = \frac{\partial (\cdot)}{\partial s}$ for scalar functions,
and on the other hand, similarly
\begin{align*}
d_{\bar{g}}(x,y) & =\int_{0}^{1}\sqrt{\bar{g}(\gamma_{\bar{g}}'(s),\gamma_{\bar{g}}'(s))}ds
 \ge\sqrt{a_{1}}d_{g}(x,y)
\end{align*}
This completes the proof.
\end{proof}

\bibliographystyle{IEEEtran}
\bibliography{IEEEabrv,GeometricControl}

\end{document}